\documentclass[runningheads]{llncs}
\usepackage[english]{babel}
\usepackage{epsfig} 
\usepackage{amssymb}
\usepackage{amsfonts}
\usepackage{amsmath}
\usepackage{graphicx}

\usepackage[algo2e,algoruled,vlined]{algorithm2e}

\def\A{\mathcal{A}}
\def\B{\mathcal{B}}

\def\O{\mathcal{O}}



\def\q#1{\mathtt{#1}}
\def\h{\q{h}}
\def\1{\q{1}}
\def\0{\q{0}}
\def\m{\q{\#}}
\def\j{\q{\_}}
\def\t{\q{@}}

\title{Subset seed automaton}
\titlerunning{Subset seed automaton}
\author{Gregory Kucherov\inst{1}
\and 
Laurent No{\'e}\inst{1}
\and 
Mikhail Roytberg\inst{2}
}
\institute{
LIFL/CNRS/INRIA, B\^at. M3 Cit\'e Scientifique, 59655, Villeneuve d'Ascq cedex, France, \email{\{Gregory.Kucherov,Laurent.Noe\}@lifl.fr}
\and
Institute of Mathematical Problems in Biology, Pushchino, Moscow
Region, 142290, Russia, \email{mroytberg@mail.ru}
}
\authorrunning{G.~Kucherov, L.~No{\'e}, M.~Roytberg}
\date{{\normalsize{\today}}}


\begin{document}
\maketitle
\begin{abstract}
We study the pattern matching automaton introduced in
\cite{KucherovNoeRoytbergJBCB06} for the purpose of seed-based 
similarity search. We show that our definition provides a compact
automaton, much smaller than the one obtained by applying the
Aho-Corasick construction. We study properties of this automaton and
present an efficient implementation of the automaton construction. We
also present some experimental results and show
that this automaton can be successfully applied to more general
situations.  
\end{abstract}

\section{Introduction}

The technique of {\em spaced seeds} 
for similarity search in strings
(sequences) was introduced about five years ago
\cite{BurkhardtKarkkainenFI03,PatternHunter02} and constituted an
important algorithmic development
\cite{BrownLiMaJBCB04,BrownUnpublished07}. Its main applications have
been approximate string matching \cite{BurkhardtKarkkainenFI03} and
local alignment of DNA sequences \cite{PatternHunter02,PatternHunter04,NoeKucherovNAR05} but the 
underlying idea applies also to other algorithmic problems on strings
\cite{FLASH93,TsurWABI06}. 

Since the invention of spaced seeds, different generalizations have
been proposed, such as seeds with match errors \cite{BLASTZ03,SunBuhlerBMCBioinformatics06},
{\em daughter seeds}~\cite{CsurosMaCOCOON05}, {\em indel seeds}
\cite{MakGelfandBensonBioinformatics06}, or 
{\em vector seeds}~\cite{BrejovaBrownVinarJCSS05}. 
In \cite{KucherovNoeRoytbergJBCB06}, we proposed the notion of 
{\em subset seeds} and demonstrated its advantages and its usefulness
for DNA sequence alignment. 
In the formalism of subset seeds, an alignment is viewed as a text
over some alphabet $\A$, and a seed as a pattern over a subset
alphabet $\B\subseteq 2^\A$. The only requirements made is that $\A$
contains a special letter $\1$, $\B$ contains a letter $\m=\{\1\}$,
and every letter of $\B$ contains $\1$ in its set.  The matching relation is
naturally defined: a seed letter $b\in\B$ matches a letter $a\in\A$
iff $a$ belongs to the set $b$. 

For any seed-based similarity search method, including all
above-mentioned types of seeds, an important issue is an 
accurate estimation of the sensitivity of a seed 
with respect to a given probabilistic model
of alignments. For different probabilistic models, this problem has
been studied in
\cite{KeichLiMaTrompDAM04,BuhlerKeichSunRECOMB03,BrejovaBrownVinarJBCB04}.
In \cite{KucherovNoeRoytbergJBCB06} we proposed a general framework
for this problem that allows one to compute the 
seed sensitivity for different definitions of seed and different
alignment models. This approach is based on a finite
automata representation of the set of target alignments and
the set of alignments matched by a seed, as well as on a
representation of the probabilistic model of alignments as a
finite-state transducer. 

A key ingredient of the approach of \cite{KucherovNoeRoytbergJBCB06}
is a finite automaton that recognizes the set of alignments matched
(or {\em hit}) by a given subset seed. We call this automaton a 
{\em subset seed automaton}. The size (number of states) of the subset
seed automaton is crucial for the efficiency of the whole algorithm of
\cite{KucherovNoeRoytbergJBCB06}. Note that the algorithm of
\cite{BuhlerKeichSunRECOMB03} is also based on an automaton
construction, namely on the Aho-Corasick automaton implied by the
well-known string matching algorithm. 

Besides its application to the seeding technique for
similarity search and string matching, constructing an efficient
subset seed automaton is an interesting problem in its own, as it
provides a solution to a variant of the {\em subset matching problem}
studied in literature
\cite{ColeHariharanSODA99,HolubSmythWangJDA06,RahmanIliopoulosMouchardWALCOM07}. 

In this paper, we study properties of the subset seed automaton and 
present an efficient implementation of its construction. More
specifically, we obtain the following results:
\begin{itemize}
\item we present a construction of subset seed automaton that has
  $\O(w2^{s-w})$ states, compared to $\O(w|\A|^{s-w})$ implied by
  the Aho-Corasick construction, where $s$ and $w$ are respectively
  the {\em span} and the {\em weight} of the seed defined in the next Section,
\item we further motivate our construction by showing that for some
  seeds, our construction gives the minimal automaton, 
\item we prove that our automaton is {\em always} smaller than the one
  obtained by the Aho-Corasick construction; we provide experimental
  data that confirm that for $|\A|=2$, our automaton is on average
  about 1.3 times bigger than the minimal one, while the 
  Aho-Corasick automaton is about 2.5 times bigger. For $|\A|=3$ the
  difference is much more substantial: while our automaton is still
  about 1.3 times bigger than the minimal one, the Aho-Corasick
  automaton turns out to be about 17 times bigger,
\item we provide an efficient algorithm that implements the
  construction of the automaton such that each transition is computed
  in constant time,
\item we show that our construction can be applied to the case of
  multiple seeds and to the general subset matching problem. 
\end{itemize}

The presented automaton construction is implemented in full generality
in {\sc Hedera} software package
(\url{http//bioinfo.lifl.fr/yass/hedera.php}) and has been applied to
the design of efficient seeds for the comparison of genomic
sequences.

\section{Subset seed matching}
\label{section:SubsetSeeds}

The goal of seeds is to specify short string patterns that, if
shared by two strings, have best chances to belong to a larger similarity
region common to the two strings. To formalize this, a similarity region
is modeled by an alignment between two strings. Usually one considers
{\em gapless alignments} that, in the simplest case, are viewed as
sequences of matches and mismatches 
and are easily specified by binary strings $\{\0,\1\}^*$, where $\1$
is interpreted as ``match'' and $\0$ as ``mismatch''. A {\em spaced seed}
is a string over binary alphabet $\{\m,\j\}$. 
The length of $\pi$ is called its {\em span} and the number of $\m$ is
called its {\em weight}. A spaced seed
$\pi\in\{\m,\j\}^s$ {\em matches} (or {\em hits}) an alignment
$A\in\{\0,\1\}^*$ at a position $p$ if for all $i\in [1..s]$,
$\pi[i]=\m$ implies $A[p+i-1]=\1$. 

In~\cite{KucherovNoeRoytbergJBCB06}, we proposed a generalization of
this basic framework, based on the idea to
distinguish between different types of mismatches in the alignments. This
leads to representing both alignments and seeds as words over larger
alphabets. In the general case, consider an alignment alphabet $\A$ of
arbitrary size. We always assume that $\A$
contains a symbol $\1$, interpreted as ``match''. 
A {\em subset seed} is defined as a word over a {\em seed alphabet}
$\B$, such that 
\begin{itemize}
\item each letter $b\in\B$ denotes a subset of 
$\A$ that contains $\1$ ($b\in 2^{\A}\setminus 2^{\A\setminus\{\1\}}$), 
\item  $\B$ contains a letter $\m$ that denotes subset $\{\1\}$.
\end{itemize}
As before, $s$ is called the {\em span} of $\pi$, and 
the {\em $\m$-weight} of $\pi$ is the number of $\m$ in $\pi$. 
A subset seed $\pi\in\B^s$ {\em matches} an alignment
$A\in\A^*$ at a position $p$ iff for all 
$i \in [1..s]$, $A[p+i-1] \in \pi[i]$.  

\begin{example}\label{example:subset-transition}
For DNA sequences over the alphabet $\{\mathtt{A,C,G,T}\}$, in
\cite{NoeKucherovBMCBioinformatics04} we considered
the alignment alphabet $\A = \{\1,\h,\0\}$ representing respectively a
match, a transition mismatch ($\mathtt{A}\leftrightarrow\mathtt{G}$,
$\mathtt{C}\leftrightarrow\mathtt{T}$), or a transversion mismatch
(other mismatch).
In this case, the appropriate seed alphabet is $\B = \{\m,\t,\j\}$ 
corresponding
respectively to subsets $\{\1\}$, $\{\1,\h\}$, and $\{\1,\h,\0\}$.
Thus, seed $\pi = {\m\t\j\m}$ matches alignment $A = {\1\0\h\1\h\1\1\0\1}$
at positions $4$ and $6$. The span of $\pi$ is $4$, and the
$\m$-weight of $\pi$ is 2.
\end{example}

One can view the problem of finding seed occurrences in an alignment
as a special string matching problem. In particular, it can be
considered as a special case of {\em subset matching}
\cite{ColeHariharanSODA99} where the text is composed of
individual characters. It is also an instance of the problem of
matching in indeterminate (degenerate) strings
\cite{HolubSmythWangJDA06,RahmanIliopoulosMouchardWALCOM07}.
Therefore, an efficient automaton construction that we present in the
following sections applies directly to these instances of string
matching. One can also freely use the string matching terminology by
replacing words ``seed'' and ``alignment'' by ``pattern'' and ``text''
respectively. 

\section{Subset Seed Automaton}
\label{section:SubsetSeedAutomaton}

Let us fix an alignment alphabet $\A$, a seed alphabet $\B$, and a
seed $\pi = \pi_1 \ldots \pi_{s} \in \B^*$ of span $s$ and $\m$-weight
$w$. Denote $r = s - w$ and let $R_{\pi}$, $|R_{\pi}|= r$, be the set
of all non-$\m$ positions in $\pi$.
Throughout the paper, we identify each position $z\in R_{\pi}$ with
the corresponding prefix $\pi_{1..z}=\pi_1\ldots \pi_z$ of $\pi$, and we 
interchangeably
regard elements of $R_{\pi}$ as positions or as prefixes of $\pi$.

We now define an automaton
$S_{\pi} = <Q, q_0, Q_F, \A,\psi : Q\times\A\to Q>$, $q_0\in Q$,
$Q_F\subseteq Q$,
that recognizes the set of all alignments matched by $\pi$.
The states $Q$ are defined as pairs $\langle X,t \rangle$ such that
$X =\{x_1,\ldots,x_k\}\subseteq R_{\pi}$, $t \in [ 0\ldots s ]$,
$\max\{X\}+t\leq s$. The
automaton maintains the following invariant condition.
Suppose that $S_{\pi}$ has read a prefix $a_1\ldots a_p$ of an
alignment $A$ and has come to a state $\langle X,t \rangle$.
Then $t$ is the length of the longest suffix of $a_1\ldots a_p$ of the
form $\1^i$, $i\leq s$,
and $X$ contains all positions $x_i\in R_\pi$
such that prefix $\pi_{1..x_i}$ matches a suffix of
$a_{1}\cdots a_{p-t}$.
 
\def\PiMotif#1#2{
  \put(#1,#2){
     \put(0,0){$\mbox{\tt \#}$}
     \put(1,0){$\mbox{\tt  @}$}
     \put(2,0){$\mbox{\tt \#}$}
     \put(3,0){$\mbox{\tt \_}$}
     \put(4,0){$\mbox{\tt \#}$}
     \put(5,0){$\mbox{\tt \#}$}
     \put(6,0){$\mbox{\tt \_}$}
     \put(7,0){$\mbox{\tt \#}$}
     \put(8,0){$\mbox{\tt \#}$}
     \put(9,0){$\mbox{\tt \#}$}
  }
}
\def\StrMotif#1#2{
  \put(#1,#2){
     \put(0,0){$\mbox{\tt 1}$}
     \put(1,0){$\mbox{\tt 1}$}
     \put(2,0){$\mbox{\tt 1}$}
     \put(3,0){$\mbox{\tt h}$}
     \put(4,0){$\mbox{\tt 1}$}
     \put(5,0){$\mbox{\tt 0}$}
     \put(6,0){$\mbox{\tt 1}$}
     \put(7,0){$\mbox{\tt 1}$}
     \put(8,0){$\mbox{\tt h}$}
     \put(9,0){$\mbox{\tt 1}$}
     \put(10,0){$\mbox{\tt 1}$} 
     \put(11,0){$\mbox{\tt }$}
  }
}
\def\PisevenPref#1#2{
  \put(#1,#2){
     \put(0,0){$\mbox{\tt \#}$}
     \put(1,0){$\mbox{\tt  @}$}
     \put(2,0){$\mbox{\tt \#}$}
     \put(3,0){$\mbox{\tt \_}$}
     \put(4,0){$\mbox{\tt \#}$}
     \put(5,0){$\mbox{\tt \#}$}
     \put(6,0){$\mbox{\tt \_}$}
  }
}
\def\PifourPref#1#2{
  \put(#1,#2){
     \put(0,0){$\mbox{\tt \#}$}
     \put(1,0){$\mbox{\tt  @}$}
     \put(2,0){$\mbox{\tt \#}$}
     \put(3,0){$\mbox{\tt \_}$}
  }
}
\def\PitwoPref#1#2{
  \put(#1,#2){
     \put(0,0){$\mbox{\tt \#}$}
     \put(1,0){$\mbox{\tt  @}$}
  }
}
\begin{figure*}[htb]\center
  \begin{picture}(100,30)(-2,-10)\noindent\centering\setlength{\unitlength}{5pt}
    \put(-19, 1){$(a)$}
    \put(-15.3, 1){$\pi = $}\PiMotif{-12}{ 1}
    \put(-19,-3){$(b)$}
    \put(-15.5,-3){$A = $}\StrMotif{-12}{-3}
    \put(8,1){$(c)$}
    \put(22.7, 4){$a_9$}
    \put(23.5, 3.5){\line(0,-1){0.5}}
    \put(24.1, 3.5){\line(0,-1){0.5}}
    \put(25.7, 3.5){\line(0,-1){0.5}}
    \put(24.1, 3.5){\line(1, 0){1.6}}
    \put(24.7, 4){$t$}
    \StrMotif{15}{2}
    \put(11, 0.1){$\pi_{1..7}=$}\PisevenPref{17}{0} 
    \put(14,-1.4){$\pi_{1..4}=$}\PifourPref{20}{-1.5}
    \put(16,-2.9){$\pi_{1..2}=$}\PitwoPref{22}{-3}
  \end{picture}%
\caption{\label{figure:Statedef} Illustration to Example~\ref{example:subset-state}}
\vspace{-5mm}
\end{figure*}
\begin{example}
\label{example:subset-state}
In the framework of Example~\ref{example:subset-transition}, 
consider a seed $\pi$ and an alignment prefix $A=a_1\ldots a_{p}$ 
of length $p = 11$ given in Figure~\ref{figure:Statedef}(a) and (b) respectively. The length $t$ of the last run of
$\1$'s of $A$ is $2$. The last non-$\1$ letter of $A$ is $a_9 = \h$.
The set $R_\pi$ of non-$\m$ positions of $\pi$ is $\{2,4,7\}$ and 
$\pi$ has 3 prefixes belonging to $R_\pi$ (Figure~\ref{figure:Statedef}(c)). 
Prefixes $\pi_{1..2}$ and $\pi_{1..7}$ do match suffixes of 
$a_1 a_2\ldots a_9$, but prefix $\pi_{1..4}$ does not.
Thus, the state of the automaton after reading $a_1 a_2\ldots a_{11}$ is
$\langle \{2,7\}, 2 \rangle$.
\end{example}

The initial state $q_0$ of $S_{\pi}$ is the state $ \langle \emptyset,
0 \rangle$.
Final states $Q_F$ of $S_{\pi}$ are all states $q = \langle X,t \rangle$, where
$max\{X\} + t = s$. All final states are merged into one state $\langle\rangle$. 

The transition function $\psi(q,a)$ is defined as follows.
If $q$ is a final state, then $\forall a \in \A$, $\psi(q,a) = q$. 
If $q = \langle X,t \rangle$ is a non-final state, then
\begin{itemize}
\item if $a = \1$ then $\psi(q,a) = \langle X,t+1 \rangle$, 
\item otherwise $\psi(q,a) =  \langle X_{U} \cup X_{V},0 \rangle$ with
  \begin{itemize}
  \item $X_{U} = \{ x         \;\: | \;\; x \leq t + 1
    \mathrm{~and~} a \in \pi_{x} \}$
  \item $X_{V} = \{ x + t + 1  \;\: | \;\; x \in X  \mathrm{~and~}
    a \in \pi_{x+t+1} \}$
  \end{itemize}
\end{itemize}
\vspace{-1.0cm}%
\begin{figure*}[htb]\center
  \includegraphics[width=12cm]{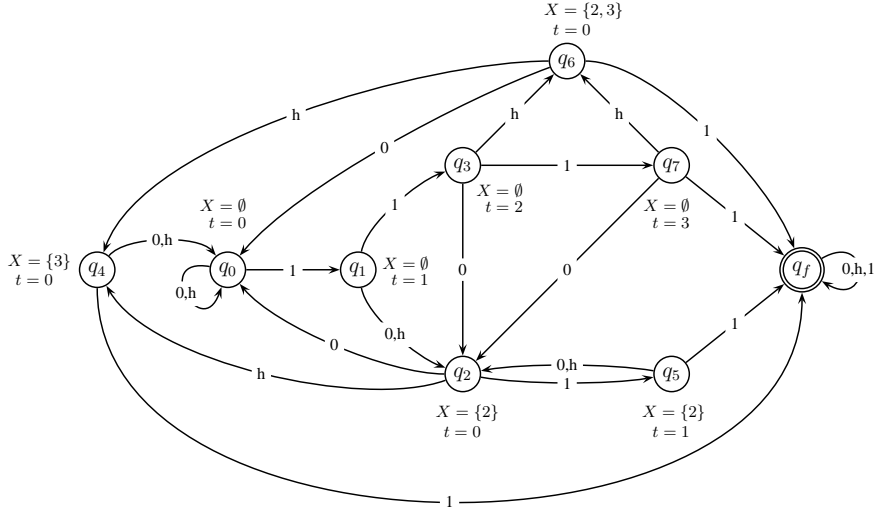}\vspace{-0.5cm}%
  \caption{\label{figure:ss-automaton} Illustration to Example~\ref{example:subset-automaton}}
\end{figure*}%
\vspace{-1.0cm}%
\begin{example}\label{example:subset-automaton}
Still in the framework of Example~\ref{example:subset-transition},
consider seed $\pi = \m\j\t\m$. Then the set $R_\pi$ is $\{2,3\}$.
Possible non-final states $\langle X,t \rangle$ of $S_\pi$ are states
$\langle \emptyset,0 \rangle$, $\langle \emptyset,1 \rangle$,
$\langle \emptyset,2 \rangle$, $\langle \emptyset,3 \rangle$,
$\langle \{2\},0 \rangle$, $\langle \{2\},1 \rangle$, $\langle
\{3\},0 \rangle$, $\langle \{2,3\},0 \rangle$.  All
these states are reachable in
$S_\pi$. Figure~\ref{figure:ss-automaton} shows the resulting
automaton.
\end{example}

We now study main properties of automaton $S_\pi$. 
\begin{lemma}\label{lemma:Gpi}
The automaton $S_{\pi}$ accepts all alignments $A\in\A^*$ matched by $\pi$.
\end{lemma}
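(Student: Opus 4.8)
The plan is to first turn the invariant stated before the lemma into a precise claim proved by induction on the length $p$ of the alignment prefix read so far, and then to use it to show that any match of $\pi$ drives $S_\pi$ into a final state. Concretely, I would prove the following: \emph{as long as $S_\pi$ has not yet entered the (absorbing) final state while reading $a_1\cdots a_p$, the current state $\langle X,t\rangle$ satisfies $t=\max\{i\le s: a_{p-i+1}\cdots a_p=\1^i\}$ and $X=\{x\in R_\pi: \pi_{1..x}\text{ matches a suffix of }a_1\cdots a_{p-t}\}$}, with the convention $\max\emptyset=0$.

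For the induction the base case $p=0$ is immediate, since the state is $\langle\emptyset,0\rangle$. In the step, reading $a_{p+1}=\1$ only lengthens the trailing run and leaves $a_1\cdots a_{p+1-(t+1)}=a_1\cdots a_{p-t}$ unchanged, which is exactly why the transition keeps $X$ and sets the run to $t+1$; here one checks only $t+1\le s$, which holds because a non-final state has $\max X+t\le s-1$. The substantial case is $a_{p+1}=a\ne\1$: a position $x\in R_\pi$ lies in the new set iff $a\in\pi_x$ and $\pi_{1..x-1}$ matches a suffix of $a_1\cdots a_p$, and I would split on whether that suffix lies entirely inside the trailing run of $t$ symbols $\1$. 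If $x\le t+1$ the prefix $\pi_{1..x-1}$ is matched against $\1^{x-1}$, which succeeds automatically because \emph{every} letter of $\B$ contains $\1$; this yields $X_U$. If $x\ge t+2$ the match must consume the first non-$\1$ letter $a_{p-t}$ at seed position $x-t-1$, which is therefore non-$\m$ and, by the induction hypothesis applied to $a_1\cdots a_{p-t}$, forces $x-t-1\in X$; this yields $X_V$. This decomposition, and in particular the use of $\1\in b$ for all $b\in\B$, is the crux of the argument and the step I expect to be most delicate to state cleanly.

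With the invariant in hand, suppose $\pi$ matches $A$ at a position $p$ and set $q=p+s-1$. If $S_\pi$ has reached the final state at some step $\le q$, then since the final state is absorbing, $A$ is accepted and we are done. Otherwise the invariant holds at step $q$, where the state $\langle X,t\rangle$ is non-final, so $\max X+t\le s-1$ and hence $t\le s-1$ and $q-t\ge p$. Let $j$ be the largest index in $[p,q]$ with $a_j\ne\1$. If no such $j$ exists then $a_p\cdots a_q=\1^s$ forces $t\ge s$, a contradiction; otherwise $t=q-j$ and the seed position $m:=s-t=j-p+1\in[1,s]$ matches $a_j\ne\1$, so $\pi_m\ne\m$, i.e.\ $m\in R_\pi$, and $\pi_{1..m}$ matches the suffix $a_p\cdots a_{q-t}$ of $a_1\cdots a_{q-t}$. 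By the invariant $m\in X$, so $\max X\ge s-t$; combined with the validity constraint $\max X\le s-t$ this gives $\max X+t=s$, i.e.\ $\langle X,t\rangle$ is final, contradicting non-finality. Hence the first alternative must hold and $A$ is accepted.

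Finally I would note what is, and is not, needed: the acceptance argument uses only the ``$\supseteq$'' direction of the invariant for $X$ (that $X$ contains every matching prefix-position) together with the bound $\max X+t\le s$ built into the state definition. The converse inclusion, though natural to prove simultaneously in the induction, is really reserved for the complementary completeness statement and is not required here.
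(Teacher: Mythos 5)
Your proof is correct and follows essentially the same route as the paper's: establish the stated invariant on states $\langle X,t\rangle$ by induction on the prefix length and then conclude via the finality condition $\max\{X\}+t=s$. You in fact supply the details the paper leaves to the reader --- the $X_U$/$X_V$ case split in the inductive step (including the role of $\1\in b$ for every $b\in\B$) and the explicit argument that a match at position $p$ forces the final state by step $p+s-1$, which is the direction the lemma actually asserts.
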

\begin{proof}
It can be verified by induction that the invariant condition on the states
$\langle X,t \rangle\in Q$ is preserved by the transition function $\psi$. 
The final state verifies $max\{X\} + t = s$ which implies
that at the first time $S_{\pi}$ gets into the final state, $\pi$ matches a suffix of $a_1\ldots a_p$. \qed
\end{proof}
%
\begin{lemma}\label{lemma:GnbStates}
The number of states of the automaton $S_{\pi}$ is no more than
$(w+1)2^{r}$, where $w$ is the $\m$-weight of $\pi$. 
\end{lemma}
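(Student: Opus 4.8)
The plan is to count the states of $S_{\pi}$ essentially exactly and then bound the resulting expression, rather than to argue state-by-state. Since all final states are collapsed into the single state $\langle\rangle$, and every admissible pair satisfies $\max\{X\}+t\le s$, I would first split the count as (number of non-final pairs) $+\,1$. For a fixed $X\subseteq R_\pi$, the admissible non-final values of $t$ are exactly $0,1,\ldots,s-\max\{X\}-1$ (writing $\max\{\emptyset\}=0$), so $X$ contributes $s-\max\{X\}$ non-final states, and the unique final value $t=s-\max\{X\}$ is absorbed into $\langle\rangle$. Summing over all $2^{r}$ subsets gives
\[
|Q| \;=\; 1 + \sum_{X\subseteq R_\pi}\bigl(s-\max\{X\}\bigr) \;=\; 1 + s\,2^{r} - \sum_{X\subseteq R_\pi}\max\{X\},
\]
so the whole problem reduces to a good lower bound on $\sum_{X}\max\{X\}$.

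To evaluate that sum I would order the non-$\m$ positions as $R_\pi=\{z_1<z_2<\cdots<z_r\}$ and group subsets by their maximum: there are exactly $2^{j-1}$ subsets whose maximum is $z_j$ (choose freely among $z_1,\ldots,z_{j-1}$), while the empty set contributes $0$. Hence $\sum_X \max\{X\}=\sum_{j=1}^{r} z_j\,2^{j-1}$. The key structural fact is that the $z_j$ are distinct positive integers, so $z_j\ge j$ for every $j$; replacing each $z_j$ by $j$ yields precisely the bound I need, with equality exactly when $R_\pi=\{1,\ldots,r\}$, i.e.\ when the $r$ non-$\m$ positions occupy the prefix of $\pi$.

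It then remains only to invoke the closed form $\sum_{j=1}^{r} j\,2^{j-1}=(r-1)2^{r}+1$ (checked in one line by induction, or by differentiating a geometric sum). Substituting $\sum_X\max\{X\}\ge (r-1)2^{r}+1$ into the displayed identity gives
\[
|Q| \;\le\; 1 + s\,2^{r} - \bigl((r-1)2^{r}+1\bigr) \;=\; (s-r+1)\,2^{r} \;=\; (w+1)2^{r},
\]
using $w=s-r$, which is the claim.

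The part needing most care is the bookkeeping, not any hard inequality: I must handle $\max\{\emptyset\}=0$ consistently and verify that collapsing the final states removes exactly one admissible pair per subset, so that the ``$+1$'' in the count is correct. I also want to flag one subtlety that rules out a shortcut: a naive per-subset bound of ``at most $w+1$ values of $t$'' is \emph{false}, since a subset with small $\max\{X\}$ admits up to $s+1$ values of $t$. The argument therefore genuinely requires the summation, in which the numerous subsets with large maxima (hence few $t$-values) compensate for the few subsets with small maxima. Once $|Q|$ is written in the form above, the elementary inequality $z_j\ge j$ together with the closed form finishes the proof cleanly.
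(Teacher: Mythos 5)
Your proof is correct and follows essentially the same route as the paper's: both decompose the state count by the maximum element of $X$ (with $2^{j-1}$ subsets having maximum $z_j$ and $s-\max\{X\}$ admissible non-final values of $t$), both exploit the inequality $z_j\ge j$ for distinct positions, and both evaluate the same sum $\sum_{j=1}^{r} j\,2^{j-1}$ to reach $(s-r+1)2^{r}=(w+1)2^{r}$. The only cosmetic difference is that you phrase it as an exact count minus a lower bound on $\sum_{X}\max\{X\}$, while the paper sums per-maximum upper bounds $2^{i-1}(s-z_i)\le 2^{i-1}(s-i)$ directly.
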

\begin{proof}
Assume that $R_\pi=\{z_1,z_2,\ldots,z_r\}$
and $z_1 < z_2 \cdots < z_r$. Let $Q_i$ be the set of non-final states
$\langle X,t \rangle$ with $max\{X\} = z_i$. For states $q = \langle
X,t \rangle \in Q_i$ there
are $2^{i-1}$ possible values of $X$ and $s-z_i$ possible values of 
$t$ between $0$ and $s-z_i-1$, as $max\{X\}+t\leq s-1$. 

Thus,\vspace{-0.5cm}%
{\small
\begin{eqnarray}\label{equation:nbStates}
                       |Q_i|   \;\leq\;                2^{i-1} (s-z_i)
                       & \leq & 2^{i-1} (s-i), \mbox{ and} \\
      \sum_{i=1}^{r}   |Q_i|   \;\leq\; \sum_{i=1}^{r} 2^{i-1} (s-i)   & =    &(s-r+1) 2^r - s - 1.
\end{eqnarray}}
Besides states $Q_i$, $Q$ contains $s$ states $\langle \emptyset,t \rangle$
($t\in [0..s-1]$) and one final state. 
Thus, $|Q| \leq (s-r+1) 2^r = (w+1) 2^r$. \qed
\end{proof}

Note that if $\pi$ starts with $\m$, which is always the case for
spaced seeds, then $X_i \geq i+1$, $i\in[1..r]$, and the
bound of 
(\ref{equation:nbStates}) rewrites to $2^{i-1} (s-i-1)$. 
This results in the same $w 2^r$ bound on number of states 
as the one for
the Aho-Corasick automaton proposed in \cite{BuhlerKeichSunRECOMB03}
for spaced seeds (see also Lemma~\ref{surjection} below). 

The next Lemma shows that the construction of automaton $S_\pi$ is
optimal in the sense that no two states can be merged in general.
\begin{lemma}\label{lemma:nbStatesWorstCase}
Let $\A=\{\0,\1\}$ and $\B=\{\m,\j\}$, where $\m=\{\1\}$ and
$\j=\{\0,\1\}$. 
Consider a seed $\pi=\m\j\cdots \j\m$ with $r$ letters '$\j$' between
two $\m$'s. Then the automaton $S_\pi$ is reduced, that is
\begin{itemize}
\item[(i)] each of its states $q$ is reachable, and
\item[(ii)] any two non-final states $q', q''$ are not equivalent.
\end{itemize}
\end{lemma}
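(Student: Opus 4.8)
The plan is to exploit the very special shape of this seed. Since $\A=\{\0,\1\}$ and $\pi=\m\j\cdots\j\m$ has span $s=r+2$ with only its two extreme letters being $\m$, reading off the matching relation shows that $\pi$ matches an alignment $A$ at position $p$ iff $A[p]=A[p+s-1]=\1$, the $r$ intermediate letters being unconstrained. Hence $\pi$ matches $A$ iff $A$ contains two occurrences of $\1$ at distance exactly $r+1$, and $S_\pi$ (which is built to recognize the matched alignments, cf.\ Lemma~\ref{lemma:Gpi}) accepts exactly these words. I would make this reformulation the first step, because it trivializes everything that follows: from any non-final prefix $u$ the only information relevant to the future is which of the last $r+1$ letters of $u$ equal $\1$, since a later $\1$ completes a match exactly when the letter $r+1$ positions before it is a $\1$, and for a $\1$ appended within the next $r+1$ steps that earlier letter lies among the last $r+1$ letters of $u$. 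I encode this as a window $B(u)\subseteq\{1,\dots,r+1\}$ recording the $\1$-carrying slots among those last $r+1$ letters.

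Second, I would tie $B$ to the state $\langle X,t\rangle$. The invariant says $t$ is the trailing run of $\1$'s, the letter just before it is $\0$ (an ``anchor''), and $z\in X$ marks a $\1$ exactly $z-1$ letters before that anchor; since $\langle X,t\rangle$ is non-final we have $\max\{X\}+t\le s-1=r+1$, so every element of $X$ still sits inside the window. From this I read off that $B$ determines $\langle X,t\rangle$ and conversely: $t$ is the length of the leading run of $\1$'s of $B$ (read from the right end), the first gap being the anchor, after which $X$ is recovered; the only special case is the all-$\1$ window, which forces $\langle\emptyset,r+1\rangle$. Thus $\langle X,t\rangle\mapsto B$ is injective on non-final states, with $\langle\emptyset,t\rangle$ giving the window $\1^t\0^{\,r+1-t}$.

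For (i), given a non-final $\langle X,t\rangle$ I would read the length-$(r+1)$ block $s_1\cdots s_{r+1}$ realizing its window ($s_{r+1}=\0$ and $s_i=\1$ iff $r+2-i\in X$), followed by $\1^t$. The bound $\max\{X\}+t\le r+1$ forces $s_i=\0$ for $i\le t$, and a short case analysis of the three possible kinds of $\1$-pairs (block--block, run--run, block--run) shows no two $\1$'s lie at distance $r+1$ in this word; by the reformulation it stays non-final throughout, and by the invariant it ends in $\langle X,t\rangle$. The final state is reached by $\1^s$.

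For (ii), let $q'=\langle X',t'\rangle$ and $q''=\langle X'',t''\rangle$ be distinct non-final states. By injectivity their windows differ, say in slot $j$; I would then feed the continuation $\0^{a}\1$ with $a$ chosen so that its unique $\1$ lands exactly $r+1$ positions to the right of slot $j$. That $\1$ creates a pair at distance $r+1$ from whichever of $q',q''$ carries a $\1$ in slot $j$, and from no other interaction (there is only one new $\1$, and both states were non-final), so the word is accepted from exactly one of them. The main obstacle is precisely the bookkeeping of the second step: pinning down the indexing between the pair $(X,t)$ and the last $r+1$ letters and dispatching the degenerate all-$\1$ window, after which (i) and (ii) are both short.
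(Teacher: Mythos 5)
Your proposal is correct, and for part (ii) it takes a genuinely different route from the paper. For (i) the two arguments essentially coincide: the paper also exhibits an explicit word spelling out $X$ in reverse (its word has length $\max\{X\}$ rather than your $r{+}1$ block, exploiting $1\notin R_\pi$ to get $a_{\max X}=\0$), followed by $\1^t$. For (ii), however, the paper argues directly on the pairs $\langle X,t\rangle$ and needs two cases: if $\max\{X'\}+t'\neq\max\{X''\}+t''$ it distinguishes with $\1^d$, and otherwise it takes $g$ to be the \emph{maximum} of the symmetric difference of the shifted sets $X'\oplus t'$ and $X''\oplus t''$ and uses $\0^d\1$ with $d=(r{+}1)-g$, leaving to the reader the verification that larger elements of $X''\oplus t''$ die on the $\0$'s because of the terminal $\m$. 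Your window reformulation — the language is exactly the set of words containing two $\1$'s at distance $r{+}1$, and non-final states are in bijection with the $2^{r+1}$ subsets of the last-$(r{+}1)$-letter window via $B=[1..t]\cup(X\oplus t)$, with the all-$\1$ window decoding to $\langle\emptyset,r{+}1\rangle$ — buys a uniform argument: a single family of test words $\0^{r+1-j}\1$ for \emph{any} differing slot $j$, with no case split and no maximality requirement (your observation that any slot of the symmetric difference works is in fact also true of the paper's $g$, so your version subsumes and tidies the paper's ``details left to the reader''), and it yields as a byproduct the exact count of $2^{r+1}$ non-final states, which the paper never states. What the paper's proof buys in exchange is brevity: it avoids setting up the encoding/decoding bookkeeping and never needs the language-level reformulation or Myhill--Nerode-style transfer between state equivalence and residual languages. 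Two small points you should make explicit when writing yours up: define the window for prefixes shorter than $r{+}1$ (pad with $\0$; this is consistent since no seed prefix can match past the beginning), and note that final states are absorbing and that no $\0$ can complete a match (the last seed letter is $\m$), so ``accepted from $q$'' is legitimately tested by whether the unique new $\1$ pairs with slot $j$ — both are immediate, so neither is a genuine gap.
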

\begin{proof}

(i) Let $q = \langle X, t \rangle$ be a non-final state of the
automaton $S_\pi$, and let $X =\{x_1,\ldots, x_k\}$ with  $ x_1 < \cdots <  x_k$. 
Let $A=a_1\ldots a_{x_k}\in \{\0, \1\}^*$ be an alignment of length
$x_k$  defined as follows: $a_{p}=\1$ if, for some $i \in [1 ..k]$,  $p = x_k
- x_i + 1$, and $a_p=\0$ otherwise. 
Note that $1 \notin X $ and thus $a_{x_k}=0$. Thus
$\psi (\langle \emptyset, 0 \rangle, A) = \langle X,0 \rangle$ and finally
$\psi (\langle \emptyset,0 \rangle, A \cdot 1^t) = q$. 

(ii) For a set $X =\{x_1,\ldots, x_k\}$ and an
integer $t$, denote $X \oplus t = \{x_1+t,\ldots, x_k+t\}$. 
Let $q' = \langle X', t' \rangle$ and $q'' = \langle X'', t'' \rangle$ be non-final states of
$S_\pi$. 
If $max \{ X' \} + t' > max \{ X'' \} + t''$, then let $d =
(r+2) - (max \{ X' \} + t')$. Obviously, 
$\psi(q' ,\1^d)$ is a final state, and $\psi(q'' ,\1^d)$ is not.

Now assume that $max \{ X' \} + t' = max \{ X'' \} + t''$.
Let $g = max \{ v | (v  \in X' \oplus t' \mbox{~ and~} v \notin X''\oplus t'') 
\mbox{~or~}    (v \in X'' \oplus t''   \mbox{~and~} v \notin X' \oplus t') \}$.
By symmetry, assume that the maximum is reached on the first
condition, i.e. $g=x'_i+t'$ for some $x'_i\in X'$. Let 
$d = (r+1) - g$ and consider word $\0^d \1$. It is easy to see
that $\psi(q',\0^d \1)$ is a final state. We claim that
$\psi(q'',\0^d \1)$ is not. 
To see this, observe that none of the seed prefixes corresponding to $x\in X''$
with $x+t''>x'_i+t'$ can lead to the final state on $\0^d \1$,
due to the last $\m$ symbol of $\pi$. The details are left to the
reader. 
\qed
\end{proof}

Another interesting property of $S_{\pi}$ is the existence of a
surjective mapping from the states of the Aho-Corasick automaton onto
reachable states of $S_\pi$. 
This mapping proves that even if $S_\pi$ is not always minimized, it
has {\em always} a smaller number of states than the Aho-Corasick automaton. 
Here, by the Aho-Corasick (AC) automaton, we mean the automaton with the
states corresponding to nodes of the trie built according to the
classical Aho-Corasick construction \cite{AhoCorasick74} from the set of
all instances of the seed $\pi$. More precisely, given a seed $\pi$ of
span $s$, the set of states of the AC-automaton is $Q_{AC}=\{A\in\A^*|\,|A|\leq
s \mbox{ and } A \mbox{ is matched by prefix } \pi_{1..|A|} \}$.
The transition $\psi(A,a)$ for $A\in Q_{AC}$, $a\in\A$ yields the
{\em longest} $A'\in Q_{AC}$ which is a suffix of $Aa$. 
We assume that all final states are merged into a single sink state. 

\begin{lemma}
\label{surjection}
Consider an alignment alphabet $\A$, a seed alphabet $\B$ and a seed
$\pi\in\B^s$ of span $s$.  
There exists a surjective mapping $f:Q_{AC} \rightarrow Q$ from
the set of states of the Aho-Corasick automaton to the set of
reachable states of the subset seed automaton $S_{\pi}$. 
\end{lemma}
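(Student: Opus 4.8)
The plan is to define $f$ by running both automata on the same inputs and mapping the AC-state reached on an input $u$ to the $S_\pi$-state $\psi(q_0,u)$ reached on the same $u$. Concretely, recall that the AC-state reached on $u$ is the \emph{longest} suffix $A$ of $u$ that is matched by the prefix $\pi_{1..|A|}$ (and the merged sink once $\pi$ has hit $u$); I set $f(A)=\psi(q_0,u)$. Since every AC-state $A$ is itself reached by reading the string $A$ (as $A$ is matched by $\pi_{1..|A|}$, it is its own longest matching suffix), and every reachable state of $S_\pi$ equals $\psi(q_0,u)$ for some $u$, surjectivity of $f$ will be immediate \emph{once $f$ is shown to be well defined}, i.e. once we show that the $S_\pi$-state reached on $u$ depends only on the AC-state reached on $u$.

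The first step is to record exactly what information about $u$ the state $\langle X,t\rangle=\psi(q_0,u)$ retains. By the invariant of Section~\ref{section:SubsetSeedAutomaton}, $t$ is the length (capped at $s$) of the trailing run of $\1$'s of $u$, and $x\in X$ iff $x\in R_{\pi}$ and $\pi_{1..x}$ matches the length-$x$ suffix of $u$ with its trailing $\1^{t}$ removed. The key reformulation, using that \emph{every} letter of $\B$ contains $\1$ (so $\pi_{x+1},\dots,\pi_{x+t}$ all match $\1$), is that the latter condition is equivalent to: $\pi_{1..x+t}$ matches the length-$(x+t)$ suffix of $u$. Hence both $t$ and $X$ are determined by the set $D(u)=\{z\in[0..s]\mid \pi_{1..z}\text{ matches the length-}z\text{ suffix of }u\}$ together with the length of the trailing run of $\1$'s of $u$.

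The crux is then a maximality argument showing that these data agree for $u$ and for its AC-state $A$. Because $A$ is the \emph{longest} suffix of $u$ matched by a prefix of $\pi$, any $z\in D(u)$ satisfies $z\le|A|$, and for such $z$ the length-$z$ suffixes of $u$ and of $A$ coincide (as $A$ is a suffix of $u$); conversely any $z>|A|$ lies in neither $D(u)$ nor $D(A)$. Thus $D(u)=D(A)$. Likewise $\1^{t}$ is a suffix of $u$ matched by $\pi_{1..t}$, so $t\le|A|\le s$, which forces the trailing run of $\1$'s of $A$ to equal that of $u$ (capped at $s$). By the previous paragraph $\psi(q_0,u)=\psi(q_0,A)$, so $f$ is well defined; the sink of $Q_{AC}$ maps to the final state $\langle\rangle$, and surjectivity follows as noted. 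I expect the maximality step establishing $D(u)=D(A)$ to be the main point: everything else is bookkeeping on the invariant, whereas this is where the specific ``longest matching suffix'' semantics of the Aho--Corasick construction is used, and it is exactly what makes the AC right-congruence refine that of $S_\pi$.
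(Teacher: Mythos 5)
Your proof is correct, and it takes a genuinely different (though closely related) route from the paper's. The paper avoids the well-definedness question entirely by defining $f$ in closed form, state by state: for $A\in Q_{AC}$ it decomposes $A=A'\1^t$ (last letter of $A'$ not $\1$) and sets $f(A)=\langle X,t\rangle$, where $X$ contains $|A'|$ together with all $i<|A'|$ such that $\pi_{1..i}$ matches a suffix of $A'$ --- in effect writing down $\psi(q_0,A)$ explicitly; it then proves surjectivity constructively, taking a witness $C=C'\1^t$ of a reachable state $\langle X,t\rangle$ and checking that the suffix of $C$ of length $\max\{X\}+t$ belongs to $Q_{AC}$ and is a preimage. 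You instead define $f$ by factoring the two run maps, so the entire burden shifts to your key claim that $\psi(q_0,u)=\psi(q_0,A)$ where $A$ is the AC-state (longest suffix of $u$ matched by a prefix of $\pi$), which you prove via the $D(u)$ reformulation of the invariant (absorbing the trailing $\1$-run into the prefix, legitimate since every seed letter contains $\1$) plus the maximality of $A$. The paper never states this congruence-refinement claim, although its surjectivity computation is morally the same observation applied to the witness $C$. Your route buys two things: surjectivity comes for free, and you obtain directly that the AC right-congruence refines that of $S_\pi$ --- which is exactly the morphism property the paper only asserts, without proof, in the remark following Lemma~\ref{surjection}. The paper's route buys an explicit formula for $f$ and no representative-independence argument. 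Two small points you glide over are harmless but worth noting: for a non-final state the ``capped at $s$'' proviso is vacuous, since a trailing run of length $\geq s$ already forces a hit (every letter of $\pi$ contains $\1$), so $t<s$ equals the true trailing run; and well-definedness at the sink implicitly uses that the AC sink is reached on $u$ iff the final state $\langle\rangle$ of $S_\pi$ is, i.e.\ the correctness of $S_\pi$ (Lemma~\ref{lemma:Gpi} together with the converse direction supplied by the invariant).
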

\begin{proof}
We first define the mapping $f$. Consider a state $A\in Q_{AC}$,
$|A|=p<s$, where $A$ is matched by $\pi_{1..p}$. Decompose
$A=A'{\1}^t$, where the last letter of $A'$ is not $\1$. If $A'$ is
empty, define $f(A)=\langle \emptyset,t \rangle$. Otherwise,
$\pi_{1..p-t}$ matches $A'$ and $\pi[p-t]\neq\m$. Let $X$ be a set of
positions that contains $p-t$ together with all positions $i<p-t$
such that $\pi_{1..i}$ matches a suffix of $A'$. Define
$f(A)=\langle X,t \rangle$. It is easy to see that $\langle X,t
\rangle\in Q$, that $\langle X,t \rangle$
exists in $S_{\pi}$ and is reachable by string $A$. 

Now show that for every reachable state $\langle X,t \rangle\in Q$ of $S_{\pi}$
there exists $A\in Q_{AC}$ such that $f(A)=\langle X,t \rangle$. Consider a string
$C\in\A^*$ that gets $S_{\pi}$ to the state $\langle X,t \rangle$. Then $C=C'{\1}^t$
and the last letter of $C'$ is not $\1$. If $X$ is empty then define
$A=\1^t$. If $X$ is not empty, then consider the suffix $A'$ of $C'$ of
length $x=\max\{X\}$ and define $A=A'\1^t$. Since $\pi_{1..x}$ matches
$A'$, and $x+t\leq s$, then $\pi_{1..x+t}$ matches $A$ and therefore
$A\in Q_{AC}$. It is easy to see that $f(A)=\langle X,t \rangle$. 
\qed
\end{proof}

Observe that the mapping of Lemma~\ref{surjection} is actually a
morphism from the Aho-Corasick automaton to $S_{\pi}$. 

Table~\ref{table:Avg1}
shows experimentally estimated average sizes of the Aho-Corasick
automaton, subset seed automaton, and minimal automaton. The two
tables correspond respectively to the binary alphabet (spaced seeds)
and ternary alphabet (see Example~\ref{example:subset-transition}). 
For Aho-Corasick and subset seed automata, the ratio to the average
size of the minimal automaton is shown. Each line corresponds to a
seed weight ($\m$-weight for $|\A|=3$). In each case, 10000
random seeds of different span have been generated to estimate the
average. 

\begin{table*}[htb]%
  \begin{center}%
    {\scriptsize%
      \begin{tabular}{c|cc|cc|cc}%
        {\bf $|\A| = 2$} &
        \multicolumn{2}{|c}{Aho-Corasick}&
        \multicolumn{2}{|c}{$S_\pi$}&
        \multicolumn{1}{|c}{Minimized}\\
        $w$     &
        $avg.$  &
        $ratio$ &
        $avg.$  &
        $ratio$ &
        $avg.$  \\
        \hline
        \hline
        9  & 130.98 & 2.46 & 67.03 & 1.260 & 53.18\\
        10 & 140.28 & 2.51 & 70.27 & 1.255 & 55.98\\
        11 & 150.16 & 2.55 & 73.99 & 1.254 & 58.99\\
        12 & 159.26 & 2.57 & 77.39 & 1.248 & 62.00\\
        13 & 168.19 & 2.59 & 80.92 & 1.246 & 64.92\\
      \end{tabular}%
      ~~~
      \begin{tabular}{c|cc|cc|cc}
        {\bf  $|\A| = 3$} &
        \multicolumn{2}{|c}{Aho-Corasick}&
        \multicolumn{2}{|c}{$S_\pi$}&
        \multicolumn{1}{|c}{Minimized}\\
        $w$    &
        $avg.$ &
        $ratio$&
        $avg.$ &
        $ratio$&
        $avg.$ \\
        \hline
        \hline
        9  & 1103.5 & 16.46 &  86.71 & 1.293 & 67.05\\
        10 & 1187.7 & 16.91 &  90.67 & 1.291 & 70.25\\
        11 & 1265.3 & 17.18 &  95.05 & 1.291 & 73.65\\
        12 & 1346.1 & 17.50 &  98.99 & 1.287 & 76.90\\
        13 & 1419.3 & 17.67 & 103.10 & 1.284 & 80.31\\
      \end{tabular}
    }
    \caption{\it\label{table:Avg1} Average number of states of Aho-Corasick, $S_\pi$ and minimal automaton}%
  \end{center}%
  \vspace{-5mm}
\end{table*}%

\section{Subset seed automaton implementation}
\label{section:SubsetSeedAutomatonImplementation}
As in section~\ref{section:SubsetSeedAutomaton}, 
consider a subset seed $\pi$ of $\#$-weight $w$ and span $s$, and let 
$r = s-w$ be the number of non-$\#$ positions.
A straightforward generation of the transition table of the automaton
$S_\pi$ can be performed in time $\O(r \cdot w \cdot 2^{r} \cdot |\A|
)$.
In this section, we show that $S_\pi$ can be constructed in time
proportional to its size, which is bounded by $(w+1)2^r$, according
to Lemma~\ref{lemma:GnbStates}. In practice, however, the number of
states is usually much smaller.

The algorithm generates the states of the automaton incrementally by
traversing them in the breadth-first manner. Transitions $\psi(\langle
X,t \rangle,a)$ are computed using previously computed transitions
$\psi(\langle X',t \rangle,a)$. A tricky part of the algorithm
corresponds to the case where state $\psi(\langle X,t \rangle,a)$ has
already been created before and should be retrieved.

The whole construction of the automaton is given in
Algorithm~\ref{algorithm:subsetseedautomaton}. We now describe it in
more details.

Let $R_\pi = \{z_1,\ldots,z_r\}$ and $z_{1}< z_{2}\cdots <
z_r$. Consider $X \subseteq R_\pi$. To retrieve the maximal element of
$X$, the algorithm maintains a function $k(X)$ defined by

\[
k(X) = \max\{i | z_i \in X\},\ k(\emptyset)=0.
\]

Let $q=\langle X,t \rangle$ be a non-final and reachable state of
$S_\pi$, $X=\{x_{1},\ldots,x_{i}\}\subseteq R_\pi$ and $x_{1} <
x_{2} \cdots < x_{i}$. We define $X' = X \setminus \{z_{k(X)}\} =
\{x_{1},\ldots,x_{i-1}\}$ and $q'=\langle X',t \rangle$. The following
lemma holds.

\begin{lemma}\label{lemma:TransitionFunctionPrecedence}
If $q=\langle X,t \rangle$ is reachable, then $q'=\langle X',t
\rangle$ is reachable and has been processed before in a breadth-first
computation of $S_\pi$.
\end{lemma}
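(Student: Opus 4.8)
The plan is to exploit that a breadth-first traversal of $S_\pi$ visits states in nondecreasing order of their distance from $q_0=\langle\emptyset,0\rangle$, where I write $d(q)$ for the length of a shortest alignment reaching a state $q$. Hence it suffices to establish two facts: that $q'=\langle X',t\rangle$ is reachable at all, and that $d(q')<d(q)$; the second fact immediately yields that $q'$ is dequeued (processed) strictly before $q$. Throughout I would argue via the invariant characterization of reachable states rather than by unwinding $\psi$ directly, since the invariant describes the state $\langle Y,u\rangle$ reached after reading a word $W$ in terms of $W$ itself: $u$ is the length of the trailing run of $\1$'s of $W$, and $Y$ collects exactly those positions of $R_\pi$ whose corresponding prefix of $\pi$ matches a suffix of the part of $W$ preceding that run.

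First I would fix an alignment $A$ reaching $q=\langle X,t\rangle$ and write $A=A'\1^t$ with the last letter of $A'$ not equal to $\1$ (this is legitimate because $X\neq\emptyset$, so $A'$ is nonempty). Put $x_i=\max\{X\}=z_{k(X)}$. By the invariant for $q$, the prefix $\pi_{1..x_i}$ matches the length-$x_i$ suffix of $A'$, so in particular $|A'|\ge x_i$. The proposed witness for $q'$ is the word $B\1^t$, where $B$ is the length-$(x_i-1)$ suffix of $A'$ (well defined since $|A'|\ge x_i$; if $x_i=1$ then $B$ is empty). Since the last letter of $B$ coincides with the last letter of $A'$ and is therefore non-$\1$ (or $B$ is empty), reading $B\1^t$ produces a state whose trailing run of $\1$'s is exactly $t$, so it only remains to identify its position set.

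The main step, which I expect to carry the real content, is to show that reading $B$ produces exactly the set $X'$. For the inclusion $X'\subseteq Y$, every $x_j\in X'$ satisfies $x_j\le x_{i-1}<x_i$, hence $x_j\le x_i-1=|B|$, so the length-$x_j$ suffix of $A'$ matched by $\pi_{1..x_j}$ is equally a suffix of $B$. For the reverse inclusion, any position $y$ with $\pi_{1..y}$ matching a suffix of $B$ obeys $y\le|B|<x_i$; that suffix of $B$ is also a suffix of $A'$, so $y\in X$ by the invariant for $q$, and since $y<x_i$ we get $y\in X'$. The crucial point is that $x_i$ itself cannot reappear, precisely because a prefix of length $x_i$ is too long to match a suffix of the strictly shorter word $B$; this is exactly why deleting $z_{k(X)}$ (the maximal element) is the right operation. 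Consequently $B\1^t$ reaches $\langle X',t\rangle=q'$, establishing reachability.

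Finally I would compare distances. By the invariant, any alignment reaching $q$ must contain a block of length at least $x_i$ ending in a non-$\1$ letter (to accommodate the match of $\pi_{1..x_i}$) followed by $\1^t$, so $d(q)\ge x_i+t$; meanwhile the witness gives $d(q')\le|B\1^t|=(x_i-1)+t<x_i+t\le d(q)$. Therefore $d(q')<d(q)$, so $q'$ is processed before $q$ in the breadth-first computation. The only delicate part is the set equality $Y=X'$ of the third paragraph; the rest is bookkeeping with the invariant and with the definition of $d(\cdot)$.
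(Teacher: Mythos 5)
Your proof is correct and takes essentially the same route as the paper's: both construct a witness for $q'=\langle X',t\rangle$ by taking a suffix of a witness for $q$ and appending $\1^t$ (the paper first passes to $\langle X,0\rangle$, picks a canonical witness of length exactly $z_{k(X)}$ and takes its suffix of length $z_{k(X')}$, while you take the suffix of length $\max\{X\}-1$ of an arbitrary witness, which works for the same reason: any prefix of $\pi$ matching a suffix of the shortened word already matches a suffix of the original, hence lies in $X$ and, being shorter than $\max\{X\}$, in $X'$), and both conclude the breadth-first ordering from the shorter witness. Your explicit lower bound $d(q)\ge \max\{X\}+t$, which upgrades the paper's comparison of two particular witness lengths into a genuine shortest-distance argument, fills in a step the paper leaves implicit, but it is a refinement of the same argument rather than a different approach.
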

\begin{proof}
  First prove that $\langle X',t \rangle$ is reachable.
  If $\langle X,t \rangle$ is reachable, then  $\langle X,0 \rangle$
  is reachable due to the definition of transition function for $t >
  0$.  Thus, there is a word $A$ of length $x_i = z_{k(X)}$ such that
  $\forall j \in [1..r]$,  $z_j \in X$ iff the seed suffix
  $\pi_{1..z_j}$ matches the word suffix $A_{x_i-z_j+1} \cdots
  A_{x_i}$.
  Define $A'$ to be the suffix of $A$ of length $x_{i-1}=z_{k(X')}$ and observe
  that reading $A'$ gets the automaton to
  the state $\langle X',0 \rangle$, and then reading $A'\cdot 1^t$
  leads to the state $\langle X',t \rangle$.
  Finally, as $|A' \cdot 1^t| <  |A \cdot 1^t| $, then the breadth-first
  traversal of states of $A_\pi$ always processes state $\langle X',t \rangle$
  before $\langle X,t \rangle$. \qed
\end{proof}%

To retrieve $X'$ from $X$, the algorithm maintains a function
${\mbox{\sc Fail}}(q)$, similar to the {\em failure} function of the
Aho-Corasick automaton, such that ${\mbox{\sc Fail}}(\langle X,t
\rangle)=\langle X',t \rangle$ for $X\neq\emptyset$, and
${\mbox{\sc Fail}}(\langle \emptyset,t \rangle)=\langle
\emptyset,max\{t-1,0\} \rangle$. 

We now explain how values $\psi(q,a)$ are computed by
Algorithm~\ref{algorithm:subsetseedautomaton}. 
Note first that if $a = \1$, state $\psi(q,a)=\langle X,t+1 \rangle$ 
can be computed in constant time (part a. of Algorithm~\ref{algorithm:subsetseedautomaton}).
Moreover, since this is the only way to reach state $\langle X,t+1
\rangle$, it is created and added once to the set of states.

Assume now that $a \neq \1$. To compute $\psi(q,a)=\langle
Y,0\rangle$, we retrieve state $q'={\mbox{\sc Fail}}(q)=\langle X',t\rangle$ and then
retrieve $\psi(q',a)=\langle Y',0\rangle$. Note that this is
well-defined as by Lemma~\ref{lemma:TransitionFunctionPrecedence},
$q'$ has been processed before $q$. 

Observe now that since $X'$
and $X$ differ by only one seed prefix $\pi_{1..z_{k(X)}}$
the only possible difference between $Y$ and $Y'$ can be the prefix
$\pi_{1..z_{k(X)}+t+1}$ depending on whether $\pi_{z_{k(X)}+t+1}$ matches
$a$ or not.  As $a\neq\1$, this is equivalent to testing whether
$(z_{k(X)}+t+1) \in R_{\pi}$ and $\pi_{z_{k(X)}+t+1}$ matches $a$. This
information can be precomputed for different values $k(X)$ and $t$. 

For every $a\neq \1$, we define
\begin{eqnarray*}
  V (k,t,a) & = & \begin{cases}  
    \{z_k + t + 1\} & \mbox{~if~} z_k + t + 1\in R_{\pi}
    \mbox{~and~} \; \pi_{z_k + t + 1}  \mbox{~matches~} a, \\   
    \emptyset & \mbox{~otherwise.} \\
  \end{cases}\\ 
\end{eqnarray*}  
Thus, $Y = Y' \cup V(k(X),t,a)$ (part c. of Algorithm~\ref{algorithm:subsetseedautomaton}). Function $V (k,t,a)$
can be precomputed in time and space $\mathcal{O}(|\A| \cdot r \cdot
s)$.

Note that if $V(k,t,a)$ is empty, then $\langle Y,0 \rangle$ is equal to
an already created state $\langle Y',0 \rangle$ and no new state needs
to be created in this case (part e. of
Algorithm~\ref{algorithm:subsetseedautomaton}).

If  $V(k,t,a)$ is not empty, we need to find out if $\langle Y,0
\rangle$ has already been created or not and if it has, we need to
retrieve it. To do that, we need an additional construction.
For each state $q' = \langle X',t \rangle$, we maintain another
function ${\mbox{\sc RevMaxFail}}(q')$, that gives
the {\em last created} state $q = \langle X,t \rangle$ such that
$X \backslash z_{k(X)} = X'$ (part d. of
Algorithm~\ref{algorithm:subsetseedautomaton}).
Since the state generation is breadth-first, new states $\langle X,t
\rangle$ are created in a non-decreasing order of the quantity $(z_{k(X)}
+ t)$. Therefore, among all states $\langle X,t \rangle$ such that
${\mbox{\sc Fail}}(\langle X,t \rangle)=\langle X',t \rangle$,
${\mbox{\sc RevMaxFail}}(\langle X',t \rangle)$ returns the one with
the largest $z_{k(X)}$. 

Now, observe that if $V(k,t,a)$ is not empty,
i.e. $Y=Y'\cup\{z_{k(X)}+t+1\}$, then
${\mbox{\sc Fail}}(\langle Y,0\rangle)=\langle Y',0 \rangle$. Since
state $\langle Y,0\rangle$ has the maximal possible current value
$z_{k(Y)}+0=z_{k(X)}+t+1$, by the above remark, we conclude that if
$\langle Y,0\rangle$ has already been created, then
${\mbox{\sc RevMaxFail}}(\langle Y',0 \rangle)=\langle Y,0
\rangle$. This allows us to check if this is indeed the case and to
retrieve the state $\langle Y,0\rangle$ if it exists (part d. of
Algorithm~\ref{algorithm:subsetseedautomaton}).

The generation of states $\langle X,t \rangle$ with $X=\emptyset$
represents a special case (part b. of
Algorithm~\ref{algorithm:subsetseedautomaton}). Here another
precomputed function is used:
\begin{eqnarray*}
U(t,a) & = & \cup \{  x  | x \leq t + 1 \mathrm{~and~} a
\mathrm{~matches~} \pi_{x}\}
\end{eqnarray*}
$U(t,a)$ gives the set of seed prefixes that match the
word $1^{t} \cdot a$. In this case, checking if resulting states have
been already added is done in a similar way to
$V(k,t,a)$. Details are left out.

{\small
\begin{algorithm2e}[H]
  \label{algorithm:subsetseedautomaton}
  \caption{computation of $S_{\pi}$}
  \KwData{a seed $\pi = \pi_1\pi_2 \ldots \pi_s$}
  \KwResult{an automaton $S_\pi = \langle Q,q_0,q_F,\mathcal{A},\psi \rangle$}
  $q_F \leftarrow createstate(\langle\rangle)$; $q_0 \leftarrow createstate(\langle \emptyset,0 \rangle)$\;
  \emph{/* process the first level of states to set  ${\mbox{\sc Fail}}$ and
    ${\mbox{\sc RevMaxFail}}$ */ }\\
  \For{$a \in \mathcal{A}$}{
    \eIf{$a \in \pi_1$}{
      \eIf{$a = \1$}{
        $\langle Y,t_y \rangle \leftarrow  \langle \emptyset,1 \rangle$\;
      }
      {
        $\langle Y,t_y \rangle \leftarrow  \langle \{1\},0 \rangle$\;
      }
      \eIf{$z_{k(Y)} + t_y \geq s$}{
        $q_y \leftarrow q_F$\;
      }{
        $q_y \leftarrow createstate(\langle Y,t_y \rangle)$\;
        ${\mbox{\sc Fail}}(q_y) \leftarrow q_0$; ${\mbox{\sc RevMaxFail}}(q_0) \leftarrow  q_y$\;
        $push(Queue,q_y)$\;
      }
    }{
      $q_y \leftarrow q_0$\;
    }
    $\psi(q_0,a) \leftarrow q_y$\;
  }
  ~\\
  \emph{/* breadth-first processing */ }\\
  \While{$Queue \neq \emptyset$}{
    $q  : \langle X,t_X    \rangle \leftarrow  pop(Queue)$\;
    $q' \leftarrow  {\mbox{\sc Fail}}(q)$\;
    \For{$a \in \mathcal{A}$}{
      \emph{/* compute $\psi(\langle X,t_X \rangle,a) = \langle Y,t_y \rangle$ */ }\\
      $q'_{y} : \langle Y',t'_{y} \rangle \leftarrow  \psi(q',a)$\;            
      \eIf{$a = \1$}{
        $Y   \leftarrow   X$\;
        \nlset{a} $t_y \leftarrow t_X + 1$\;
      }{
        \eIf{$X = \emptyset$}{
          \nlset{b} $Y \leftarrow  U(t_X,a)$\;
        }{
          \nlset{c} $Y \leftarrow  Y' \cup V(k(X),t_X,a)$\;
        }
        $t_y \leftarrow 0$\;
      }
      ~\\
      \emph{ /* create a new state unless it already exists or it is final */ }\\
      $q_{rev} : \langle Y_{rev},t_{rev} \rangle \leftarrow {\mbox{\sc RevMaxFail}}(q'_{y})$\;
      \uIf{$defined(q_{rev})$ {\bf and } $t_y = t_{rev}$ {\bf and } $Y = Y_{rev}$}{
              \nlset{d} $q_y \leftarrow q_{rev}$\;
      }
      \uElseIf{$t_y = t'_{y}$ {\bf and } $Y = Y'$}{
        \nlset{e}   $q_y \leftarrow q'_{y}$\;
      }
      \Else{
        \eIf{$z_{k(Y)} + t_y \geq s$}{
          $q_y \leftarrow q_{F}$\;
        }{
          $q_y \leftarrow createstate(\langle Y,t_y \rangle)$\;
          ${\mbox{\sc Fail}}(q_y) \leftarrow q'_{y}$;  ${\mbox{\sc RevMaxFail}}(q'_{y}) \leftarrow q_y$\;
          $push(Queue,q_y)$\;
        }
      }
      $\psi(q,a) \leftarrow q_y$\;
    }
  }
\end{algorithm2e}
}

We summarize the results of this section with the following Lemma.   
\begin{lemma}\label{lemma:TransitionFunctionConstantTime}
After a preprocessing of seed $\pi$ within time $\O(|\A|\cdot s^2)$, the
automaton $S_{\pi}$ can be constructed by incrementally generating all
reachable states so that every transition
$\psi(q,a)$ is computed in constant time. 
\end{lemma}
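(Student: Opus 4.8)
The plan is to establish that Algorithm~\ref{algorithm:subsetseedautomaton} correctly realizes $S_\pi$ within the stated bounds, treating the preprocessing cost, the per-transition cost, and the constant-time duplicate detection as three separate concerns. First I would bound the preprocessing. The only non-trivial precomputed data are the tables $V(k,t,a)$ and $U(t,a)$, together with a Boolean table recording, for each position $j\in[1..s]$ and each $a\in\A$, whether $a\in\pi_j$. The matching table costs $\O(|\A|\cdot s)$; each of the $\O(|\A|\cdot r\cdot s)$ entries of $V$ is then a single lookup, and $U$ can be built incrementally in $t$ via $U(t,a)=U(t-1,a)\cup\{t+1\mid a\in\pi_{t+1}\}$, so that the whole table is produced in $\O(|\A|\cdot s^2)$. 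Since $r\le s$, everything fits within $\O(|\A|\cdot s^2)$.

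Next I would argue the per-transition cost is $\O(1)$. For $a=\1$ the target is $\langle X,t_X+1\rangle$, computed and stored in constant time (part~a.). For $a\neq\1$, Lemma~\ref{lemma:TransitionFunctionPrecedence} guarantees that $q'={\mbox{\sc Fail}}(q)$ has already been processed, so the transition $\psi(q',a)=\langle Y',0\rangle$ is available; the target set is then obtained as $Y=U(t_X,a)$ when $X=\emptyset$ (part~b.) or as $Y=Y'\cup V(k(X),t_X,a)$ otherwise (part~c.), the latter adding at most one element to the already-computed set $Y'$. Thus the computation of $Y$ is constant-time, and it remains to locate $\langle Y,0\rangle$ among the states already created, which is the crux of the matter.

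The main obstacle is precisely this duplicate detection in constant time, which I would handle through the two disjoint cases of the algorithm. If $V(k(X),t_X,a)$ is empty then $Y=Y'$ and the target is literally the existing state $q'_y=\psi(q',a)$ (part~e.). If $V(k(X),t_X,a)=\{z_{k(X)}+t_X+1\}$ is non-empty, then $Y=Y'\cup\{z_{k(X)}+t_X+1\}$, so ${\mbox{\sc Fail}}(\langle Y,0\rangle)=\langle Y',0\rangle$ and $z_{k(Y)}=z_{k(X)}+t_X+1$. Here I would invoke the monotonicity property of the breadth-first generation: writing $d(\langle X,t\rangle)=z_{k(X)}+t$, every transition sends a state of value $d$ to a state of value at most $d+1$, and a genuinely new state always attains value exactly $d+1$ (either $\langle X,t+1\rangle$, or $Y$ whose new maximal element is $z_{k(X)}+t_X+1$); hence states are created in non-decreasing order of $d$. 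Consequently, among all states failing into $\langle Y',0\rangle$ the field ${\mbox{\sc RevMaxFail}}(\langle Y',0\rangle)$ holds the one created last, i.e. the one of maximal value $d$. Since $z_{k(Y)}$ is the maximal value attainable by any state failing into $\langle Y',0\rangle$, the sought state $\langle Y,0\rangle$, if it exists, is exactly ${\mbox{\sc RevMaxFail}}(\langle Y',0\rangle)$, giving a constant-time test and retrieval (part~d.); otherwise the state is freshly created and its ${\mbox{\sc Fail}}$/${\mbox{\sc RevMaxFail}}$ links set in constant time, the $X=\emptyset$ case being symmetric. Combining with Lemma~\ref{lemma:GnbStates}, which bounds the number of states by $(w+1)2^r$, and the $|\A|$ constant-time transitions per state, the automaton is built in time proportional to its size after the $\O(|\A|\cdot s^2)$ preprocessing, as claimed.
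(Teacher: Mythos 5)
Your proposal is correct and takes essentially the same route as the paper: the $\O(|\A|\cdot s^2)$ precomputation of the tables $U$ and $V$, the use of ${\mbox{\sc Fail}}$ links justified by Lemma~\ref{lemma:TransitionFunctionPrecedence} to reuse $\psi(q',a)$, the split into parts a/b/c/e, and constant-time duplicate retrieval via ${\mbox{\sc RevMaxFail}}$ based on states being created in non-decreasing order of $z_{k(X)}+t$ under breadth-first generation. Your explicit induction establishing that monotonicity (each new state has value exactly one more than the state being processed) is a slightly more detailed justification of a fact the paper merely asserts, but the argument is otherwise the paper's own.
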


\section{Possible extensions}
An important remark is that the automaton defined in this paper can be
easily generalized to the 
case of multiple seeds. For seeds $\pi^1, \ldots, \pi^k$, a state of
the automaton recognizing the alignments matched by one of the seeds
would be a tuple $<X_1,\ldots,X_k,t>$, where $X_1,\ldots,X_k$ contain
the set of respective prefixes, similarly to the construction of the
paper. Interestingly, Lemma~\ref{surjection} still holds for the case
of multiple seeds. This means that although the size of the union of
individual seed automata could potentially grow as the product of
sizes, it actually does not, as it is bounded by the size of the
Aho-Corasick automaton which grows additively with respect to subsets
of underlying words. 
In practice, our
automaton is still substantially smaller than the Aho-Corasick
automaton, as illustrated by Table~\ref{table:Avg2}. Similar to
Table~\ref{table:Avg1}, 10000 random seed pairs have been generated
here in each case to estimate the average size. 

\begin{table*}[htb]%
  \vspace{-0.5cm}
  \begin{center}%
    {\scriptsize%
      \begin{tabular}{c|cc|cc|cc}%
        {\bf $|\A| = 2$} &
        \multicolumn{2}{|c}{Aho-Corasick}&
        \multicolumn{2}{|c}{$S_\pi$}&
        \multicolumn{1}{|c}{Minimized}\\
        $w$    &
        $avg.$ &
        $ratio$&
        $avg.$ &
        $ratio$&
        $avg.$ \\
        \hline
        \hline
        9  & 224.49 & 2.01 & 122.82 & 1.10 & 111.43\\
        10 & 243.32 & 2.07 & 129.68 & 1.10 & 117.71\\
        11 & 264.04 & 2.11 & 137.78 & 1.10 & 125.02\\
        12 & 282.51 & 2.15 & 144.97 & 1.10 & 131.68\\
        13 & 300.59 & 2.18 & 151.59 & 1.10 & 137.74\\
      \end{tabular}%
      ~~~
      \begin{tabular}{c|cc|cc|cc}
        {\bf  $|\A| = 3$} &
        \multicolumn{2}{|c}{Aho-Corasick}&
        \multicolumn{2}{|c}{$S_\pi$}&
        \multicolumn{1}{|c}{Minimized}\\
        $w$    &
        $avg.$ &
        $ratio$&
        $avg.$ &
        $ratio$&
        $avg.$ \\
        \hline
        \hline
         9 & 2130.6 & 12.09 & 201.69 & 1.15 & 176.27\\
        10 & 2297.8 & 12.53 & 209.75 & 1.14 & 183.40\\
        11 & 2456.5 & 12.86 & 218.27 & 1.14 & 191.04\\
        12 & 2600.6 & 13.14 & 226.14 & 1.14 & 198.00\\
        13 & 2778.0 & 13.39 & 236.62 & 1.14 & 207.51\\
      \end{tabular}
    }
    
    \caption{\it\label{table:Avg2} Average number of states of
      Aho-Corasick, $S_\pi$ and minimized automata for the case of two seeds}%
  \end{center}%
  \vspace{-0.8cm}
\end{table*}%

Another interesting observation is that the construction of a matching 
automaton where each state is associated with a set of ``compatible''
prefixes of the pattern is a general one and can be applied to
the general problem of subset matching
\cite{ColeHariharanSODA99,AmirPoratLewensteinSODA01,HolubSmythWangJDA06,RahmanIliopoulosMouchardWALCOM07}. 
Recall that in subset matching, a pattern is composed of subsets of 
alphabet letters. This is the case, for example, with IUPAC
genomic motifs, such as motif $\mathtt{ANDGR}$ representing the
subset motif ${\scriptstyle A[ACGT][AGT]G[AG]}$. Note that the
text can also be composed of subset letters, with two possible
matching interpretations \cite{RahmanIliopoulosMouchardWALCOM07}: a seed letter
$b$ matches a text letter $a$ either if $a\subseteq b$ or if
$a\cap b\neq\emptyset$. 

Interestingly, the automaton construction of this paper still
applies to these cases with minor modifications due to the absence of text
letter $\1$ matched by any seed letter. With this modification, the
automaton construction algorithm of
Section~\ref{section:SubsetSeedAutomatonImplementation} still
applies. As a test case, we applied it to subset motif
${\scriptstyle [GA][GA]GGGNNNNAN[CT]ATGNN[AT]NNNNN[CTG]}$ 
mentioned in \cite{RahmanIliopoulosMouchardWALCOM07} as a motif describing the
translation initiation site in the {\em E.coli} genome. 
For a regular 4-letters genomic text, the automaton obtained 
with our approach has only 138 states, while the minimal
automaton has 126 states.
For a text composed of 15 subsets of 4 letters and the inclusion
matching relation, our automaton contains 139 states, compared to 127
states of the minimal automaton. However, in the case of intersection
matching relation, the automaton size increases drastically: it
contains 87617 states compared to the 10482 states of the minimal
automaton.

\bibliographystyle{splncs}
\bibliography{main}
\end{document}